\DeclareMathSymbol{\qedsymb} {\mathord}{AMSa}{"04}
\newcommand{\qed}{\hspace*{0pt}\hfill$\qedsymb$}
\newcommand{\eps}{\varepsilon}
\newtheorem{definition}{Definition}
\newtheorem{lemma}[definition]{Lemma}
\newenvironment{proof}{\noindent{\bf Proof}\ \ }{\qed\medskip}
\DeclareMathOperator{\poly}{poly}
\begin{document}

\title{Testing Distribution Identity Efficiently}
\author{Krzysztof Onak\\ MIT\\ {\tt konak@mit.edu}\iffalse \and Ronitt Rubinfeld\\ MIT, Tel Aviv University\\ {\tt ronitt@csail.mit.edu}\fi}
\date{}

\maketitle

\begin{abstract}
We consider the problem of testing distribution identity. Given a sequence of independent samples from an unknown distribution on a domain of size $n$, the goal is to check if the unknown distribution approximately equals a known distribution on the same domain. While Batu, Fortnow, Fischer, Kumar, Rubinfeld, and White (FOCS 2001) proved that the sample complexity of the problem is $\tilde O(\sqrt{n} \cdot \poly(1/\eps))$, the running time of their tester is much higher: $O(n) + \tilde O(\sqrt{n} \cdot \poly(1/\eps))$.
We modify their tester to achieve a running time of $\tilde O(\sqrt{n}\cdot \poly(1/\eps))$.
\end{abstract}

Let $p$ and $q$ be two probability distributions on $[n]$\footnote{We write $[k]$ to denote the set $\{1,2,\ldots,k\}$, for any positive integer $k$.}, and let $\|p-q\|_1$ denote the $\ell_1$-distance between $p$ and $q$. In this paper, algorithms have access to two distributions $q$ and $p$.
\begin{itemize}
 \item The distribution $p$ is \emph{known}: for each $i \in [n]$, the algorithm can query the probability $p_i$ of $i$ in constant time.
 \item The distribution $q$ is \emph{unknown}: the algorithm can only obtain an independent \emph{sample} from $q$ in constant time.
\end{itemize}
An \emph{identity tester} is an algorithm such that:
\begin{itemize}
\item if $p=q$, then it accepts with probability $2/3$,
\item if $\|p-q\|_1 \ge \eps$, then it rejects with probability $2/3$.
\end{itemize}

Batu, Fortnow, Fischer, Kumar, Rubinfeld, and White \cite{BFFKRW} proved that there is an identity tester that uses only $\tilde O(\sqrt{n} \cdot \poly(1/\eps))$ samples from $q$. A shortcoming of their algorithm is a running time of $O(n) + \tilde O(\sqrt{n} \cdot \poly(1/\eps))$. In this note, we show that their tester can be modified to achieve a running time of $\tilde O(\sqrt{n} \cdot \poly(1/\eps))$. It is also well known that $\Omega(\sqrt{n})$ samples are required to tell the uniform distribution on $[n]$ from a distribution that is uniform on a random subset of $[n]$ of size $n/2$.

\section{The Original Tester}

We now describe the tester of Batu\ {\it et al.}\ \cite{BFFKRW}, which is outlined as Algorithm~\ref{alg:batu}.
Let $\eps'=\eps/C$, where $C$ is a sufficiently large positive constant.
The tester starts by partitioning the set $[n]$ into $k+1 =\left\lceil\log_{1+\eps'}\frac{2n}{\eps}\right\rceil + 1 = O(\frac{1}{\eps}\cdot\log (n/\eps))$ sets $R_0$, $R_1$, \ldots, $R_k$ in Step~1, where 
$$R_j = \left\{i \in [n]: \frac{\eps}{2n}\cdot (1+\eps')^{j-1} < p_i \le \frac{\eps}{2n}\cdot (1+\eps')^{j} \right\}$$
for $j > 0$, and
$$R_0 = \left\{i \in [n]: p_i \le \frac{\eps}{2n} \right\}.$$
We then define probabilities of each set according to $p$ and $q$: $P_j=\sum_{i \in R_j} p_i$ and $Q_j=\sum_{i \in R_j} q_i$. The tester computes and estimates those probabilities in Steps~2 and~3.
In Step~4, the tester verifies that the probabilities of sets $R_j$ in both the distributions are close. Finally, in Steps~5--7, the tester verifies that $q$ restricted to each $R_j$ is approximately uniform, by comparing second moments of $p$ and $q$ over each $R_j$.
If $q$ passes the test with probability greater than $1/3$, it must be close to $p$. On the other hand, if $p=q$, then the parameters can be set so that $q$ passes with probability $2/3$.

\begin{algorithm}[t]
\label{alg:batu}
\caption{Outline of the tester of Batu {\it et al.}\ \cite{BFFKRW}}

Partition $[n]$ into $R_0$, $R_1$, \ldots, $R_k$ \;
Compute $P_j$, for $j \in \{0,1,\ldots,k\}$\;
Use $O((k/\eps)^2\cdot\log k)$ samples from $q$ to get an estimate $Q'_j$ of each $Q_j$ up to $\eps/(4k+4)$\;
\lIf{$\|(P_0,\ldots,P_k)-(Q'_0,\ldots,Q'_k)\|_1 > \eps/4$}{{\bf REJECT}}\;
Let $s_i$, $i\in[n]$, be the number of occurrences of $i$ in a sample of size $S = \tilde O(\sqrt{n} \cdot \poly(1/\eps))$\;
\For{$j > 0$ s.t.\ $P_j > \eps/(4k+4)$}{
\lIf{$\sum_{i\in R_j} \binom{s_i}{2} > (1+\eps/4) \cdot \binom{Q}{2} \cdot P_j \cdot \frac{\eps}{2n}(1+\eps')^j$}{\bf REJECT}
}
{\bf ACCEPT}
\end{algorithm}

Note that the additive linear term in the complexity of the tester comes from explicitly computing each $R_i$ and each $P_i$ in Steps~1--2.

\section{Our Improvement}

Note that the partition of $[n]$ into sets $R_j$ need not be computed explicitly, since
for each sample $i$ from $q$, one can check which $R_j$ it belongs to by querying $p_i$.

We observe that one can verify that $\|(P_0,\ldots,P_k)-(Q_0,\ldots,Q_k)\|_1$ is small without explicitly computing each $P_i$. We use Algorithm~\ref{alg:ineq} for this purpose. Let $j^\star$ be an index such that an element of probability $1/\sqrt{n}$ would belong to $R_{j^\star}$. The algorithm is based on the following facts:
\begin{itemize}
 \item For $j < j^\star$, if $P_j$ is not negligible, $R_j$ must be large, and a good additive estimate to $P_j$
can be obtained by uniformly sampling $\tilde O(\sqrt{n} \cdot \poly(1/\eps))$ elements of $[n]$, and computing the weight of those that belong to $R_j$.

 \item If $p=q$, we are likely to learn all elements in $R_j$, $j \ge j^\star$, by sampling only $\tilde O(\sqrt{n})$ elements of $q$. This gives the exact value of each $P_j$, $j > j^\star$. If $p\ne q$, this method still gives lower bounds for each $P_j$.
\end{itemize}
If $\|(P_0,\ldots,P_k)-(Q'_0,\ldots,Q'_k)\|_1 \ge \delta$, our estimates for $P_j$ and $Q_j$ are likely to be sufficiently different. A detailed proof follows.

\begin{algorithm}[t]
\label{alg:ineq}
\caption{Telling $p=q$ (Case 1) from $\|(P_0,\ldots,P_k)-(Q_0,\ldots,Q_k)\|_1 \ge \delta$ (Case 2)}
Use $O((k/\delta)^2\cdot\log k)$ samples from $q$ to get an estimate $Q'_j$ of each $Q_j$ up to $\delta/(8k+8)$ \;
Let $j^\star$ be an index such that an element of probability $1/\sqrt{n}$ would belong to $R_{j^\star}$ \;
Let $S_1$ be a set of $O(\sqrt{n} \cdot \log n)$ samples from $q$\;
\For{$j$ s.t.\ $j^\star \le j \le k$}{
Let $T_{j} = S_1 \cap R_j$ (with no repetitions)\;
\lIf{$ \left|Q'_j - \sum_{i \in T_j} p_i \right| >  \frac{\delta}{8k+8}$}{{\bf return} ``Case 2''}
}
Let $S_2$ be a set of $O\left(\left(\frac{k}{\delta}\right)^3\sqrt{n}\cdot\log k\right)$ independent samples from $[n]$ with replacement\;
\For{$j$ s.t.\ $j < j^\star$}{
Let $U_j = S_2 \cap R_j$ (with repetitions)\;
\lIf{$\left| Q'_j - \frac{\sum_{i\in U_j} p_i}{|S_2|} \right| >  \frac{\delta}{4k+4}$}{{\bf return} ``Case 2''}
}
{{\bf return} ``Case 1''}
\end{algorithm}

\begin{lemma}
Algorithm~\ref{alg:ineq} with appropriately chosen constants tells $p=q$ (Case~1) from
$\|(P_0,\ldots,P_k)-(Q_0,\ldots,Q_k)\|_1 \ge \delta$ (Case~2) with probability $9/10$.
\end{lemma}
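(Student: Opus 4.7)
The plan is to fix three high-probability events, one for each independent random quantity used by the algorithm: the Step~1 estimates $Q'_j$, the sample $S_1$ drawn from $q$, and the uniform sample $S_2$ drawn from $[n]$. A standard Chernoff and union bound over the $k+1$ buckets shows that the Step~1 sample size makes $|Q'_j - Q_j| \le \delta/(8k+8)$ hold for all $j$, except with probability at most $1/30$.

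For the heavy buckets $j \ge j^\star$, the defining property of $j^\star$ forces each $i \in R_j$ to have $p_i = \Omega(1/\sqrt{n})$, so there are only $O(\sqrt{n})$ such elements in total. When $p = q$, each of them is missed by the $\Theta(\sqrt{n}\log n)$ samples in $S_1$ with probability at most $n^{-c}$, so a union bound gives $T_j = R_j$ and hence $\sum_{i \in T_j} p_i = P_j$ exactly for every heavy $j$, except with probability at most $1/30$; in general we retain only the deterministic one-sided bound $\sum_{i \in T_j} p_i \le P_j$. For the light buckets $j < j^\star$, every $p_i$ with $i \in R_j$ is $O(1/\sqrt{n})$, so Bernstein's inequality applied to the unbiased estimator $\hat{P}_j := (n/|S_2|)\sum_{i \in U_j} p_i$ of $P_j$---an average of independent terms of range $O(\sqrt{n})$ and variance $O(\sqrt{n}\cdot P_j)$---yields $|\hat{P}_j - P_j| \le \delta/(8k+8)$ simultaneously for all light $j$, except with probability $1/30$. (I read the displayed light-bucket estimator in the natural normalisation; the stated formula appears to be missing the factor of $n$.)

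Case~1 is then immediate: the heavy test sees $|Q'_j - P_j| \le \delta/(8k+8)$ and the light test sees $|Q'_j - \hat{P}_j| \le \delta/(4k+4)$, so nothing rejects. For Case~2, the identity $\sum_j (Q_j - P_j) = 0$ rewrites $\|P - Q\|_1 \ge \delta$ as $\sum_j (Q_j - P_j)^+ \ge \delta/2$, and pigeonholing over the $k+1$ buckets produces some $j$ with $Q_j - P_j \ge \delta/(2k+2)$. If this $j$ is heavy, $Q'_j - \sum_{i \in T_j} p_i \ge (Q_j - \delta/(8k+8)) - P_j > \delta/(8k+8)$, triggering the heavy-bucket rejection; if it is light, $|Q'_j - \hat{P}_j| \ge (Q_j - P_j) - \delta/(4k+4) > \delta/(4k+4)$ after a mild tightening of the constants above, triggering the light-bucket rejection. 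The three failure probabilities add to at most $1/10$, giving the required $9/10$.

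The main subtlety---and the reason for the asymmetric pigeonhole---is that the heavy-bucket test is inherently one-sided: since $\sum_{i \in T_j} p_i$ is a deterministic lower bound on $P_j$, the test can catch $Q_j$ exceeding $P_j$ but never the reverse. The identity $\sum_j (Q_j - P_j)^+ = \|P - Q\|_1/2$ is precisely what guarantees that the ``$Q_j$ too large'' direction alone absorbs at least half of the total $\ell_1$ gap, which is exactly what the heavy test can see.
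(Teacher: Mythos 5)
Your proof is correct and follows essentially the same route as the paper: fix three high-probability events (the $Q'_j$ estimates, the coupon-collector event that $S_1$ catches every element of mass $\ge \frac{1}{2\sqrt n}$, and the concentration of the $U_j$-based estimates), then for Case~2 pigeonhole on the positive part $\sum_j(Q_j-P_j)^+\ge\delta/2$ to find a heavy or light bucket with $Q_{j'}-P_{j'}\ge\frac{\delta}{2(k+1)}$, using the deterministic inequality $\sum_{i\in T_j}p_i\le P_j$ for the heavy case. Your observation that the displayed light-bucket estimator should be $\frac{n}{|S_2|}\sum_{i\in U_j}p_i$ is also right---the paper's formula, both in Step~10 and in the proof text, is missing that factor of $n$, and the constant $\frac{\delta}{2k}$ in the paper's pigeonhole step should more carefully be $\frac{\delta}{2(k+1)}$ as you write.
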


\begin{proof}
The multiplicative constant in the sample size of Step~1 is such that Step~1 succeeds with probability $99/100$. The size of $S_1$ is chosen such that with probability $99/100$, $S_1$ contains all elements $i$ of probability $q_i \ge \frac{1}{2\sqrt{n}}$ by the coupons collector's problem. Finally, the size of $S_2$ is chosen such that with probability $99/100$, for each $j < j^\star$, $\left|\frac{\sum_{i\in U_j} p_i}{|S_2|} - P_j\right| \le \frac{\delta}{8k+8}$. To see this, let us first focus on $j < j^\star$ such that $P_j \ge \frac{\delta}{16(k+1)}$. Note that each $i \in S_2$ contributes with a value in $[0,1/\sqrt{n}]$ to $\sum_{i\in U_j} p_i$. By the Chernoff bound, $O\left(\left(\frac{k}{\delta}\right)^3\sqrt{n}\cdot\log k\right)$ samples suffice to estimate $P_j$ with multiplicative error $1+\frac{\delta}{8k+8}$ with probability $1 - \frac{1}{200k}$, which implies additive error at most $\frac{\delta}{8k+8}$ as well. For $j < j^\star$ such that $P_j < \frac{\delta}{16(k+1)}$, the Chernoff bound still guarantees with the same probability that the estimate is less than $\frac{\delta}{8k+8}$.

If $p=q$, then Algorithm~\ref{alg:ineq} discovers this with probability $97/100$ due to the following facts. Firstly, $T_j = R_j$, for $j \ge j^\star$, so $\sum_{i \in T_j} p_i = P_j$. Therefore, provided all $Q'_j$ are good approximations to the corresponding $Q_j$, $q$ always passes Step~6. Secondly, if all $\frac{\sum_{i\in U_j} p_i}{|S_2|}$, $0 \le j  < j^\star$, are good approximations of the corresponding $P_j$, $q$ always passes Step~10 as well.

If $\|(P_0,\ldots,P_k)-(Q_0,\ldots,Q_k)\|_1 \ge \delta$, there is $j'$ such that $Q_{j'} > P_{j'} + \frac{\delta}{2k}$. If $j' \ge j^\star$, then because $P_j$ is always greater than or equal to $\sum_{i \in T_j} p_i$, the tester concludes in Step~6 for $j = j'$ that Case~2 occurs, provided $Q'_{j'}$ is a good approximation to $Q_{j'}$, which happens with probability at least $99/100$. If $j' < j^\star$, then because we have good approximations to both $Q_{j'}$ and $P_{j'}$ with probability $98/100$, and their distance is at least $\frac{\delta}{2k} - 2 \frac{\delta}{8k+8} > \frac{\delta}{4k+4}$, the algorithm concludes in Step~10 for $j = j'$ that Case~2 occurs.
\end{proof}

To get an efficient tester, we replace Steps~2--4 of Algorithm~\ref{alg:batu} with Algorithm~\ref{alg:ineq}, where we set $\delta$ to $\eps/C$ for a sufficiently large constant $C$. If Algorithm~\ref{alg:ineq} concludes that Case~2 occurs, the new algorithm immediately rejects. Furthermore, if it is not the case that $\|(P_0,\ldots,P_k)-(Q_0,\ldots,Q_k)\|_1 \ge \delta$, Steps~6 and~7 work with estimates $Q'_j$ instead of the exact values $P_j$ up to a modification of constants.

\subsection*{Acknowledgment}

The author thanks Ronitt Rubinfeld for asking the question.

\bibliographystyle{alpha}
\bibliography{efficient_identity}

\end{document}